\documentclass[twocolumn]{article}
\pdfoutput=1
\usepackage[utf8]{inputenc}
\usepackage{amsmath}
\usepackage{amssymb}
\usepackage{array}
\usepackage{clrscode3e}
\usepackage{float}
\usepackage{url}
\usepackage{xspace}
\usepackage{graphicx}
\usepackage{enumerate}

\frenchspacing

\makeatletter
\newcommand{\defineabbrev}[1]{\@namedef{#1}{\textsc{\lowercase{#1}}\xspace}}
\makeatother

\defineabbrev{GCD}
\defineabbrev{GMP}
\defineabbrev{HGCD}

\newcommand{\floor}[1]{\lfloor #1 \rfloor}

\newtheorem{proposition}{Proposition}
\newenvironment{proof}
{\par\noindent\textbf{Proof}:}
{\hfill\quad$\square$\par\smallskip}

\newfloat{algorithm}{tp}{loa}
\floatname{algorithm}{Algorithm}

\newcommand{\Z}{\mathbb{Z}}
\renewcommand{\gets}{\leftarrow}

\title{Efficient computation of the Jacobi symbol}
\author{Niels Möller}
\date{2019}

\begin{document}
\maketitle

\begin{abstract}
  The family of left-to-right \GCD algorithms reduces input numbers by
  repeatedly subtracting the smaller number, or multiple of the
  smaller number, from the larger number. This paper describes how to
  extend any such algorithm to compute the Jacobi symbol, using a
  single table lookup per reduction. For both quadratic time \GCD
  algorithms (Euclid, Lehmer) and subquadratic algorithms (Knuth,
  Schönhage, Möller), the additional cost is linear, roughly one table
  lookup per quotient in the quotient sequence. This method was used
  for the 2010 rewrite of the Jacobi symbol computation in \GMP.
\end{abstract}

\section{Introduction}

The Legendre symbol and its generalizations, the Jacobi symbol and the
Kronecker symbol, are important functions in number theory. For
simplicity, in this paper we focus on computation of the Jacobi
symbol, since the Kronecker symbol can be computed by the same
function with a little preprocessing of the inputs.

\subsection{Jacobi and GCD}

Two quadratic algorithms for computing the Kronecker symbol
(and hence also the Jacobi symbol) are described as Algorithm~1.4.10
and~1.4.12 in~\cite{cohen}. These algorithms run in quadratic time,
and consists of a series of reduction steps, related to Euclid's \GCD
algorithm and the binary \GCD algorithm, respectively. Both Kronecker
algorithms share one property with the binary \GCD algorithm: The
reduction steps examine the current pair of numbers in both ends. They
examine the least significant end to cast out powers of two, and they
examine the most significant end to determine a quotient (like in
Euclid's algorithm) or to determine which number is largest (like in
the binary \GCD algorithm).

Fast, subquadratic, \GCD algorithms work by di\-vide-and-conquer, where
a substantial piece of the work is done by examining only one half of
the input numbers. Fast left-to-right \GCD is related to fast
algorithms for computing the continued fraction
expansion~\cite{schoenhage:1971,knuth:algorithms}. These are
left-to-right algorithms, in that they process the input from the most
significant end. The binary recursive algorithm~\cite{stehle} is a
right-to-left algorithm, in that it processes inputs from the least
significant end. The asymptotic running times of these algorithms are
$O(M(n) \log n)$, where $M(n)$ denotes the time needed to multiply two
$n$-bit numbers. The \GCD algorithm used in recent versions of
the \GMP library~\cite{gmp} is a variant of Schönhage's
algorithm~\cite{moller-sgcd}.

It is possible to compute the Jacobi symbol in subquadratic time,
with the same asymptotic complexity as \GCD. One algorithm is
described in~\cite{bach-shallit} (solution to exercise~5.52), which
says:
\begin{quotation}
  This complexity bound is part of the ``folklore'' and has
  apparently never appeared in print. The basic idea can be found in
  Gauss [1876]. Our presentation is based on that in Bachmann [1902].
  H. W. Lenstra, Jr. also informed us of this idea; he attributes it
  to A. Schönhage.
\end{quotation}

Since the quadratic algorithms for the Jacobi symbol examines the data
at both ends, some reorganization is necessary to construct a
divide-and-conquer algorithm that processes data from one end. The
binary \GCD algorithm has the same problem. In the binary recursive
\GCD algorithm, this is handled by using a slightly different
reduction step using 2-adic division.

Recently, the binary recursive \GCD algorithm has been extended to
compute the Jacobi symbol~\cite{brent:jacobi}. The main difference to
the corresponding \GCD algorithm is that it needs the intermediate
reduced values to be non-negative, and to ensure this the binary
quotients must be chosen in the range $1 \leq q < 2^{k+1}$ rather than
$|q| < 2^k$. As a result, the algorithm is slower than the
\GCD algorithm by a small constant factor.

\subsection{Main contribution}

This paper describes a fairly simple extension to a
wide class of left-to-right \GCD algorithms, including Lehmer's
algorithm and the subquadratic algorithm in~\cite{moller-sgcd}, which
computes the Jacobi symbol using only $O(n)$ extra time and $O(1)$
extra space\footnote{The size of the additional state to be maintained
  is $O(1)$. But in a practical implementation, which does not store
  this state in a global variable, either the state or a pointer to it
  will be copied into each activation record, which for a subquadratic
  recursive divide-and-conquer algorithm costs $O(\log n)$ extra
  space rather than $O(1)$}. This indicates that also for the fastest
algorithms for large inputs, the cost is essentially the same for
computing the \GCD and computing the Jacobi symbol.\footnote{Even though
  we cannot rule out the existence of a left-to-right \GCD algorithm
  which is a constant factor faster than Jacobi. Such an algorithm
  would lie outside the class of ``generic left-to-right \GCD
  algorithms'' we describe in this paper, e.g., it might use
  intermediate reduced values of varying signs and quotients that are
  rounded towards the nearest integer rather than towards $-\infty$.}

Like the algorithm described in~\cite{bach-shallit}, the computation
is related to the quotient sequence. The updates of the Jacobi symbol
are somewhat different, instead following an unpublished algorithm by
Schönhage~\cite{schoenhage-brent-communication} for computing the
Jacobi symbol from the quotient sequence modulo four. In the \GCD
algorithms in \GMP, the quotients are not always applied in a single
step; instead, there is a series of reductions of the form $a \gets a
- m b$, where $m$ is a positive number equal to or less than the
correct quotient $\floor{a/b}$. In the corresponding Jacobi algorithms,
the Jacobi sign is updated for each such partial
quotient. Most of the partial quotients are determined from truncated
inputs where the least significant parts of the numbers are ignored.
The least significant two bits, needed for the Jacobi computation, must
therefore be maintained separately.

\subsection{Notation}

The time needed to multiply two $n$-bit numbers is denoted $M(n)$,
where $M(n) = O(n \log n)$ for the fastest known algorithms.
\footnote{Multiplication in \GMP is based on the more practical
Schönhage-Strassen algorithm, with asymptotic complexity
$O(n \log n \log \log n)$.}

The Jacobi symbol is denoted $(a | b)$. We use the convention that
$[\text{condition}]$ means the function that is one when the condition
is true, otherwise 0, e.g., $(0 | b) = [b = 1]$.

\section{Left-to-right GCD}

In this paper, we will not describe the details of fast \GCD
algorithms. Instead we will consider Algorithm~\ref{alg:gcd}, which is
a generic left-to-right \GCD algorithm, with a basic reduction step
where a multiple of the smaller number is subtracted from the larger
number. We also describe the main idea of fast instantiations of this
algorithm.

\begin{algorithm}
  \centering
  \begin{codebox}
    \Procname{$g \gets \GCD(a,b)$}
    \zi In: $a, b > 0$
    \li \Repeat
    \li   \If $a \geq b$ \Then
    \li     $a \gets a - m b$, with $1 \leq m \leq \floor{a/b}$
    \li     \If $a = 0$ \Then
    \li       \Return $b$
            \End
    \li   \Else
    \li     $b \gets b - m a$, with $1 \leq m \leq \floor{b/a}$
    \li     \If $b = 0$ \Then
    \li       \Return $a$
            \End
          \End
        \End
  \end{codebox}
  \caption{Generic left-to-right \GCD algorithm.}
  \label{alg:gcd}
\end{algorithm}

This algorithm terminates after a finite number of steps, since in
each iteration $\max(a,b)$ is reduced, until $a = b$ and the algorithm
terminates. It returns the correct value, since $\GCD(a,b)$ is
unchanged by each reduction step.

The running time of an instantiation of this algorithm depends on the
choice of $m$ in each step, and on the amount of computation done in
each step. E.g., if $m = 1$, the worst case number of iterations in
exponential. Euclid's algorithm is a special case where, in each step,
$m$ is the correct quotient of the current numbers.

The faster algorithms implements an iteration that depends only on
some of the most significant bits of $a$ and $b$: These bits determine
which of $a$ and $b$ is largest, and they also suffice for computing
an $m$ which is close to the quotient $\floor{a/b}$ or $\floor{b/a}$.
Furthermore, one can compute an initial part of the sequence of
reductions based on the most significant parts of $a$ and $b$, collect
the reductions into a transformation matrix, and apply all the
reductions at once to the least significant parts of $a$ and $b$ later
on. This saves a lot of time, since it omits computing all the
intermediate $a$ and $b$ to full precision. If one repeatedly chops
off one or two of the most significant words, one gets Lehmer's
algorithm, and by chopping numbers in half, one can construct a
divide-and-conquer algorithm with subquadratic complexity.

We will extend this generic algorithm to also compute the Jacobi
symbol. To do that, we need to investigate how the basic reduction $a
- m b$ affects the Jacobi symbol. When we have sorted this out, in the
next section, the result is easily applied to all variants of
Algorithm~\ref{alg:gcd}.

\section{Left-to-right Jacobi}

In this section, we summarize the properties of the Jacobi symbol we
use, derive the update rules needed for our left-to-right algorithm.
Finally, we give the resulting algorithm and prove its correctness.

\subsection{Jacobi symbol properties}

The Jacobi symbol $(a | b)$ is defined for $b$ odd and positive, and
arbitrary $a$. We work primarily with non-negative $a$, and make use
of the following properties of the Jacobi symbol.
\begin{proposition}
  Assume that $a$ is positive and that $b$ is odd and positive. Then
  \begin{enumerate}[(i)]
  \item \label{it:zero}%
 $(0 | b) = [b = 1]$.
  \item \label{it:negation}%
    $(a | b) = (-1)^{(b-1)/2} (-a | b)$
  \item \label{it:reciprocity}%
    If both $a$ and $b$ are odd, then
    \begin{equation*}
      (a | b) = (-1)^{(a-1)(b-1)/4} (b | a)      
    \end{equation*}
  \item \label{it:odd-reduction} %
    $(a | b) = (a - m b | b)$ for any $m$.

  \item \label{it:even-reduction-4}%
    If $a = 0 \pmod 4$ and $1 \leq m \leq \floor{b/a}$, then
    \begin{equation*}
      (a | b) = (a | b - ma)      
    \end{equation*}
  \item \label{it:even-reduction-2}%
    If $a = 2 \pmod 4$ and $1 \leq m \leq \floor{b/a}$, then
    \begin{equation*}
      (a | b) = (-1)^{m(b-1)/2 + m(m-1)/2} (a | b - ma)      
    \end{equation*}
  \end{enumerate}
\end{proposition}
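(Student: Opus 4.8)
The plan is to treat (i)--(iv) as standard facts and to concentrate the work on (v) and (vi). Item (i) is the stated convention together with $(a\mid 1)=1$; item (iii) is exactly the law of quadratic reciprocity for the Jacobi symbol; item (iv) is the periodicity of $x\mapsto(x\mid b)$ with period $b$, valid for every integer $x$; and item (ii) follows from complete multiplicativity in the first argument together with the supplementary law $(-1\mid b)=(-1)^{(b-1)/2}$, since $(-a\mid b)=(-1\mid b)(a\mid b)$ and $(-1)^{(b-1)/2}$ squares to $1$. All four can be quoted from a standard reference such as~\cite{cohen}. Observe also that in (iv)--(vi) the second argument stays admissible: $b-ma$ is odd because $a$ is even, and it is positive because $1\le m\le\floor{b/a}$ forces $0\le b-ma$; moreover if $\gcd(a,b)>1$ both sides of each identity vanish, so one may assume $\gcd(a,b)=1$.

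For (v) and (vi) I would first reduce to the case $m=1$ by induction on $m$. One step of the reduction replaces $b$ by $b-a$; applying it in turn to $b,\ b-a,\ \dots,\ b-(m-1)a$ is legitimate because each of these is still $\ge a$ (again since $m\le\floor{b/a}$), so the single-step statement applies at every stage. For (v) the accumulated sign is trivial. For (vi), writing $d=a/2$ (which is odd) and noting that each step contributes $(-1)^{(b-ja-1)/2}$, the total exponent is
\begin{equation*}
  \sum_{j=0}^{m-1}\frac{b-ja-1}{2}=\frac{m(b-1)}{2}-d\,\frac{m(m-1)}{2}\equiv\frac{m(b-1)}{2}+\frac{m(m-1)}{2}\pmod 2,
\end{equation*}
the last congruence because $d$ is odd; this is exactly the claimed exponent.

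It then remains to handle $m=1$. Write $a=2^{e}d$ with $d$ odd and $e=v_2(a)\ge 1$, and split by multiplicativity: $(a\mid b)=(2\mid b)^{e}(d\mid b)$, and likewise with $b$ replaced by $b-a$. For the odd part, since $d\mid a$ we have $b\equiv b-a\pmod d$, so two applications of reciprocity together with $d$-periodicity give $(d\mid b)(d\mid b-a)=(-1)^{(d-1)(2b-a-2)/4}$, and a short calculation shows this exponent is even when $4\mid a$ and congruent to $(d-1)/2$ when $a\equiv 2\pmod 4$. For the power-of-two part I would use $(2\mid n)=(-1)^{(n^{2}-1)/8}$. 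If $4\mid a$, then either $e=2$, so $\bigl((2\mid b)(2\mid b-a)\bigr)^{e}$ is a perfect square and equals $1$, or $e\ge 3$, so $8\mid a$ and hence $(2\mid b)=(2\mid b-a)$; either way the power-of-two factor contributes $1$, and together with the even odd-part sign this gives (v). If $a\equiv 2\pmod 4$, then $e=1$, and substituting $b=2\beta+1$, $d=2\delta+1$ one checks that $(2\mid b)(2\mid b-a)=(-1)^{\beta+\delta}=(-1)^{(b-1)/2+(d-1)/2}$; multiplying by the odd-part sign $(-1)^{(d-1)/2}$ leaves $(-1)^{(b-1)/2}$, which is (vi) for $m=1$.

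The elementary manipulations above are routine, so the only real obstacle is the sign bookkeeping in (vi): because $(2\mid n)$ depends only on $n\bmod 8$ while $a\equiv 2\pmod 4$ does not determine $a\bmod 8$, the factors $(2\mid b)$ and $(2\mid b-a)$ genuinely differ in general, and one must check that this discrepancy cancels exactly against the reciprocity sign of the odd part. Reducing every exponent modulo~$2$ after the substitutions $b=2\beta+1$, $d=2\delta+1$ (and, for general $m$, telescoping as above) is what makes the cancellation transparent.
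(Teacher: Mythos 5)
Your proposal is correct, and for the only nontrivial items (v) and (vi) it takes a genuinely different route from the paper in the single-subtraction step, while the reduction to $m=1$ and the telescoping of exponents (using that $a/2$ is odd in case (vi)) coincide with what the paper does. The paper never factors $a$: for one subtraction it applies the chain (iv), (ii), (iii), (iv) directly to $(a|b)$ with $a$ even, obtaining $(a|b)=(-1)^{(b-1)/2+(b-1)(b-a-1)/4}(a|b-a)$, and then uses only the fact that $8 \mid b^2-1$ to simplify the exponent to $a(b-1)/4 \bmod 2$; the dichotomy between (v) and (vi) then falls out of whether $4 \mid a$ makes that exponent even. You instead write $a=2^e d$, split off $(2|\cdot)^e$ via multiplicativity, invoke the supplementary law $(2|n)=(-1)^{(n^2-1)/8}$ with a case split $e=2$ versus $e\ge 3$, and apply reciprocity only to the odd cofactor $d$. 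Both arguments are sound (your per-step exponent $(b-1)/2$ in case (vi) agrees with the paper's $a(b-1)/4 = d(b-1)/2$ modulo $2$ because $d$ is odd), but the paper's is shorter, needs no supplementary law and no factorization of $a$, and stays entirely within the properties (i)--(iv) already listed; yours buys an explicit explanation of why the $a \bmod 8$ dependence of $(2|\cdot)$ cancels against the odd-part reciprocity sign, and it makes explicit some hypotheses the paper leaves tacit, namely the reduction to $\gcd(a,b)=1$, the positivity and parity of the intermediate denominators $b-ja$, and the verification that each intermediate step still satisfies $b-ja\ge a$ so the single-step lemma applies.
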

\begin{proof}
  For~\eqref{it:zero} to~\eqref{it:odd-reduction} we refer to standard
  textbooks. The final two are not so well-known, and their use for
  Jacobi computation is suggested by
  Schönhage~\cite{schoenhage-brent-communication}. To
  prove them, assume that $a$ is even and $a < b$. Then
  \begin{align*}
    (a | b) &= (a - b | b) && \text{By \eqref{it:odd-reduction}}\\
    &= (-1)^{(b-1)/2} (b - a | b) && \text{By \eqref{it:negation}} \\
    &= (-1)^{(b-1)/2 + (b-1)(b-a-1)/4} (b | b-a) &&\text{By \eqref{it:reciprocity}}\\
    &= (-1)^{(b-1)/2 + (b-1)(b-a-1)/4}(a | b-a) && \text{By \eqref{it:odd-reduction}}
  \end{align*}
  Since $b^2 - 1$ is divisible by 8 for any odd $b$, we get a
  resulting exponent, modulo two, of
  \begin{equation*}
    (b-1)/2 + (b-1)(b-a-1)/4 = a (b-1)/4 
  \end{equation*}
  If $a = 0 \pmod 4$, this exponent is even and hence there is no sign
  change. And this continues to hold if the subtraction is repeated,
  which proves~\eqref{it:even-reduction-4}. Next, consider the case $a
  = 2 \pmod 4$. Then $a/2 = 1 \pmod 2$, and repeating the subtraction
  $m$ times gives the exponent
  \begin{multline*}
    a \{(b-1) + (b-a-1) + \cdots + (b - (m-1) a - 1)\} / 4 \\
    = m(b-1)/2 + m(m-1)/2 \pmod 2
  \end{multline*}
  which proves~\eqref{it:even-reduction-2}. 
\end{proof}

Finally, note that in these formulas, all the signs are determined by
the least significant two bits of $a$, $b$ and $m$.

\subsection{The new algorithm}

The \GCD algorithm works with two non-negative integers $a$ and $b$,
where multiples of the smaller one is subtracted from the larger. To
compute the Jacobi symbol we maintain these additional state
variables:
\begin{align*}
e &\in \Z_2 && \text{Current sign is $(-1)^e$} \\
\alpha &\in \Z_4 && \text{Least significant bits of $a$} \\
\beta &\in \Z_4 && \text{Least significant bits of $b$} \\
d &\in \Z_2 && \text{Index of denominator}
\end{align*}

The value of $d$ is one if the most recent reduction subtracted $b$
from $a$, and zero if it subtracted $a$ from $b$. We collect these
four variables as the state $S = (e, \alpha, \beta, d)$. The state is
updated by the function \proc{jupdate}, Algorithm~\ref{alg:jupdate}.
\begin{algorithm}
  \centering
  \begin{codebox}
    \Procname{$S' \gets \proc{jupdate} (S, d', m)$}
    \zi In: $d' \in \Z_2$, $m \in \Z_4$, $S = (e, \alpha, \beta, d)$
    \li \If $d \neq d'$ and both $\alpha$ and $\beta$ are odd \Then
    \li   $e \gets e + (\alpha - 1)(\beta - 1)/4$ \RComment Reciprocity
        \End
    \li $d \gets d'$
    \li \If $d = 1$ \Then
    \li   \If $\beta = 2$ \Then
    \li     $e \gets e + m (\alpha - 1)/2 + m (m-1)/2$
          \End
    \li   $\alpha \gets \alpha - m \beta$
    \li \Else
    \li   \If $\alpha = 2$ \Then
    \li     $e \gets e + m (\beta - 1)/2 + m (m-1)/2$
          \End
    \li   $\beta \gets \beta - m \alpha$
        \End
    \li \Return $S' = (e, \alpha, \beta, d)$
  \end{codebox}  
  \caption{Updating the state of the Jacobi symbol computation.}
  \label{alg:jupdate}
\end{algorithm}
Since the inputs of this function are nine bits, and the outputs are
six bits, it's clear it can be implemented using a lookup table
consisting of $2^9$ six-bit entries, which fits in 512 bytes if
entries are padded to byte boundaries.\footnote{One quarter of the
  entries in this table corresponds to invalid inputs, since at least
  one of $\alpha$ and $\beta$ is always odd. If we also note that the
  value of $d$ is needed only when $\alpha = \beta = 3$, the state can
  be encoded into only 26 values, and then the table can be compacted
  to only 208 entries.}

Algorithm~\ref{alg:jacobi} extends the generic left-to-right \GCD
algorithm to compute the Jacobi symbol. The main loop of this
algorithm differs from Algorithm~\ref{alg:gcd} only by the calls to
\proc{jupdate} for each reduction step.
\begin{algorithm}
  \centering
  \begin{codebox}
    \Procname{$j \gets \proc{jacobi} (a, b)$}
    \zi In: $a, b > 0$, $b$ odd
    \zi Out: The Jacobi symbol $(a | b)$
    \zi State: $S = (e, \alpha, \beta, d)$ 
    \li \label{li:jacobi-init}%
        $S \gets (0, a \bmod 4, b \bmod 4, 1)$ 
    \li \Repeat
    \li   \If $a \geq b$ \Then
    \li \label{li:update-a}%
            $a \gets a - m b$, with $1 \leq m \leq \floor{a/b}$
    \li \label{li:jacobi-update-a}%
            $S \gets \proc{jupdate}(S, 1, m \bmod 4)$
    \li     \If $a = 0$ \Then
    \li       \Return  $[b = 1] (-1)^e$
            \End
    \li   \Else
    \li     $b \gets b - m a$, with $1 \leq m \leq \floor{b/a}$
    \li \label{li:jacobi-update-b}%
            $S \gets \proc{jupdate}(S, 0, m \bmod 4)$
    \li     \If $b = 0$ \Then
    \li       \Return $[a = 1] (-1)^e$
            \End
          \End
        \End
  \end{codebox}
  \caption{The algorithm for computing the Jacobi symbol.}
  \label{alg:jacobi}
\end{algorithm}

\subsection{Correctness}

Let $a_0$ and $b_0$ denote the original inputs to
Algorithm~\ref{alg:jacobi}. Since the reduction steps and the stop
condition are the same as in Algorithm~\ref{alg:gcd}, it terminates
after a finite number of steps. We now prove that it returns $(a_0 | b_0)$.

Algorithm~\ref{alg:jacobi} clearly maintains $\alpha = a \bmod 4$ and
$\beta = b \bmod 4$. We next prove that the following holds at the
start of each iteration:

If $d = 0$ we have
\begin{equation}
  (a_0 | b_0) = (-1)^e \times
  \begin{cases}
    (b | a) & \text{$\alpha$ odd} \\
    (a | b) & \text{$\alpha$ even}
  \end{cases}
\end{equation}
and if $d = 1$ we have
\begin{equation}
  \label{eq:invariant-1}
  (a_0 | b_0) = (-1)^e \times
  \begin{cases}
    (a | b) & \text{$\beta$ odd} \\
    (b | a) & \text{$\beta$ even}
  \end{cases}
\end{equation}
This clearly holds at the start of the loop, to prove that it is
maintained, consider the case $a \geq b$ (the case $a < b$ is
analogous). Let $a$, $b$ (unchanged) and $S = (e, \alpha, \beta, d)$
denote the values of the variables before line~\ref{li:update-a}.
There are a couple of different cases, depending on the state:
\begin{itemize}
\item If $\beta$ is odd and either $\alpha$ is even or $d = 1$, then $(a_0 |
  b_0) = (-1)^e (a | b) = (-1)^e (a - m b | b)$.
\item If $\alpha$ and $\beta$ are both odd and $d = 0$, then $(a_0 | b_0) =
  (-1)^e (b | a) = (-1)^{e + (a-1)(b-1)/4} (a - m b | b)$.
\item If $\beta = 0 \pmod 4$, then $(a_0 | b_0) = (-1)^e (b |
  a) = (-1)^e (b | a - m b)$.
\item If $\beta = 2 \pmod 4$, then $(a_0 | b_0) = (-1)^e (b |
  a) = (-1)^{e + m(a-1)/2 + m(m-1)/2} (b | a - m b)$.
\end{itemize}
In each case, the call to \proc{jupdate} makes the appropriate change
to $e$, and Eq.~\eqref{eq:invariant-1} holds after the iteration.

\section{Results}

The algorithm was implemented in \GMP-5.1.0, released 2012. In
benchmarks at the time, comparing the old binary algorithm to the new
Jacobi extension of Lehmer's \GCD algorithm (both $O(n^2)$), the new
algorithm computed Jacobi symbols about twice as fast for moderate
size numbers (around 2000 bits), and 10 times faster for numbers of
size of 500000 bits. For even larger numbers, the Jacobi extension of
subquadratic \GCD brought even greater speedups.

\section*{Acknowledgments}

The author wishes to thank Richard Brent for providing valuable
background material and for encouraging the writing of this paper.

\bibliographystyle{plain}
\bibliography{ref}

\end{document}